\documentclass[10pt,twocolumn,aps,pra,amsmath,amssymb,showpacs]{revtex4-2}
\usepackage{bm}
\usepackage{mathrsfs}
\usepackage{graphicx}
\usepackage[usenames,dvipsnames,svgnames,table]{xcolor}
\usepackage[unicode=true,
            pdfusetitle, 
            bookmarks=true,
            bookmarksnumbered=false,
            bookmarksopen=false,
            breaklinks=true,
            pdfborder={0 0 0},
            backref=false,
            colorlinks=true]{hyperref}
\hypersetup{linkcolor=NavyBlue,urlcolor=NavyBlue,citecolor=NavyBlue}

\usepackage{amsthm}
\newtheorem{prop}{\protect\propositionname}
\providecommand{\propositionname}{Proposition}

\newtheorem{lemma}{Lemma}
\newtheorem{cor}{Corollary}

\usepackage{tikz-cd}

\newcommand{\cH}{\mathcal{H}}

\begin{document}

\title{Deviation from linear reduced dynamics always occurs for each non-factorisable system-environment unitary evolution}
\author{Iman Sargolzahi}
\email{sargolzahi@um.ac.ir}
\affiliation{Department of Physics, Faculty of Science, Ferdowsi University of Mashhad, Mashhad, Iran}

\begin{abstract}
In the simplest approximation, the reduced dynamics of a quantum system $S$ interacting with its environment $E$ is considered to be given by a completely positive   map. But, in general, this is not the case. In fact, the reduced dynamics of the system in not even linear, in general. Whether the reduced dynamics is linear or not is determined by two factors: the set of possible initial states of the system-environment $\mathcal{S}=\left\lbrace \rho_{SE} \right\rbrace $, and the joint system-environment unitary evolution $U$. When $U$ is factorisable as $U=U_S\otimes U_E$, then we can choose $\mathcal{S}=\mathcal{D}$, where $\mathcal{D}$ is the set of all system-environment density operators. In other words, when $U$ is factorisable, the reduced dynamics of the system $S$ is linear (in fact unitary)  for arbitrary initial state of the system-environment $\rho_{SE}$. We show that this result cannot be generalized to any  non-factorisable $U$: For any non-factorisable unitary evolution of the whole system-environment $U$, the set $\mathcal{S}$ must be chosen as a proper subset of $\mathcal{D}$ to achieve linear reduced dynamics.
As a byproduct, considering a convex set of possible initial states of the system-environment $\mathcal{S}$ such that $\mathrm{Tr}_{E} \ \mathcal{S}=\mathrm{Tr}_{E} \ \mathcal{D}$, 
 we show that when the reduced dynamics of the system,  for one system-environment unitary evolution $U_1$,  is positive, but not completely positive,    this  implies that  reduced dynamics is not linear for another $U_2$.
 \end{abstract}


\maketitle

\section{Introduction} \label{sec:A}
The reduced dynamics of a quantum system $S$ interacting with its environment $E$   is given by
\begin{equation}
\label{eq:1a}
\rho_{S}^{\prime}=\mathrm{Tr}_{E} \circ \mathrm{Ad}_U (\rho_{SE}) \equiv \mathrm{Tr}_{E}(U \rho_{SE} U^{\dagger}),
\end{equation} 
where $\rho_{SE}$ is the initial state of the system-environment, $\rho_{S}^{\prime}$ is the final state of the system, and $U$ is a unitary operator on the whole system-environment \cite{1a}.

In the most famous case, one assumes that the set of possible initial states of the system-environment is factorized as  $\mathcal{S}=\lbrace \rho_{S}\otimes \tilde{\omega}_E \rbrace$, where $\rho_{S}$ are  arbitrary initial states of the system, but $ \tilde{\omega}_E $ is a fixed initial state of the environment. Then, the reduced dynamics of the system in Eq.  (\ref{eq:1a}) is given by a completely positive trace-preserving (CP) map, i.e., $\rho_{S}^{\prime}= \mathcal{E}_S (\rho_{S})$, where $\mathcal{E}_S$ is a CP map  \cite{1a}.

Assuming the set of possible initial states of the system-environment  as  $\mathcal{S}=\lbrace \rho_{S}\otimes \tilde{\omega}_E \rbrace$ seems to be an experimentally relevant approximation: The experimenter, who has access only to the system $S$ and not the environment $E$, prepares different initial states of the system $\rho_{S}$, and then lets the system  to interact with the environment whose fixed (maybe unknown) initial state is $\tilde{\omega}_E$.
Therefore, this assumption  has been widely used in the theory of open quantum systems  \cite{2a, 3a, 4a}.

 Recently, some works have focused on relaxing this assumption and studied the consequences of considering more general sets $\mathcal{S}$ as the set of possible initial states of the system-environment.
The first consequence of such studies is that it has been  made clear that the CP-ness of the reduced dynamics is not a general fundamental result, as was assumed previously, but it is only the consequence of considering $\mathcal{S}$ as $\lbrace \rho_{S}\otimes \tilde{\omega}_E \rbrace$, which is though an experimentally  relevant approximation, but it is not the general possible   $\mathcal{S}$  \cite{5a, 6a, 7a}.

How is the reduced dynamics for a general
  set $\mathcal{S}=\left\lbrace \rho_{SE} \right\rbrace $  of possible initial states of the system-environment?
 In fact, the final state of the system $\rho_{S}^{\prime}$ in Eq.  (\ref{eq:1a}) cannot be written as a function of its initial state $\rho_{S}=\mathrm{Tr}_{E}(\rho_{SE})$, in general  \cite{8a, 7a}.
Even if $\rho_{S}^{\prime}$ can be written as a function of $\rho_{S}$, this function is not linear  in general  \cite{9a, 10a}. And if, for each  $\rho_{S} \in  \mathcal{S}_S \equiv \mathrm{Tr}_{E} \ \mathcal{S}$, we have
  $\rho_{S}^{\prime}=\Phi_S(\rho_{S})$, where $\Phi_S$ is a linear map, this $\Phi_S$ is not CP in general, but it is  a (trace-preserving) Hermitian map, i.e., it maps each Hermitian operator to a Hermitian one \cite{11a}.

The next important class of results in this context is  finding other sets  $\mathcal{S}$ of possible initial states of the system-environment,  than the factorized one $\lbrace \rho_{S}\otimes \tilde{\omega}_E \rbrace$, 
     which they  also lead to CP redued dynamics \cite{12a, 13a, 14a, 15a}. All these sets  $\mathcal{S}$ introduced in Refs. \cite{12a, 13a, 14a, 15a}  are in fact   special cases of the most general possible one introduced in Refs.  \cite{16a, 17a} (see also Ref. \cite{18a}).

Among  other studies on correlated initial states of the system-environment, we can mention the effect of such correlation on process  
tomography \cite{19a, 20a, 21a}, detecting these correlations by monitoring only the reduced dynamics of the system  \cite{21b, 21c, 21a}, introducing a general framework for linear reduced dynamics \cite{7a,  24b},
 efforts to
 introduce other schemes to study the reduced dynamics \cite{19a,  22a}, unexpected entanglement dynamics  \cite{23a, 24a, 25a, 26a}, and the
 master equations and other tools of studding open quantum systems in the presence of initial correlation \cite{30a, 31a, 32a}.

From Eq. (\ref{eq:1a}), it is clear that the reduced dynamics of the system, in addition to the set of possible initial states $\rho_{SE}$, depends on the unitary system-environment evolution $U$ too.
When $U$ is factorized as $U_S\otimes U_E$, where  $U_S$ and $U_E$ are unitary operators on the system $S$ and the environment $E$, respectively, then, trivialy, the reduced dynamics of the system is CP (in fact  unitary) for arbitrary initial state of the system-environment  $\rho_{SE}$ \cite{33a}. In other words, when $U=U_S\otimes U_E$, choosing  $\mathcal{S}=\mathcal{D}$, where $\mathcal{D}$ is the set of all system-environment density operators, results in the CP-ness of  the reduced dynamics of the system $S$.

Does a similar result hold for a non-factorisable $U$ too? In other words, whether it is   possible to find a non-factorisable $U$ such that the reduced dynamics of the system $S$ is CP for arbitrary system-environment initial state  $\rho_{SE} \in\mathcal{D} $. This question has been addressed in Refs. \cite{7a, 34a}.

The case that the system-environment evolution is governed by a time-independent Hamiltonian has been considered 
in Ref.  \cite{34a}.  Decompose the total Hamiltonian as
\begin{equation}
\label{eq:2a}
H=H_S + H_E + H_{int},
\end{equation} 
where $H_S$ and  $H_E$ act on the system and the environment, respectively, and  $H_{int}$ is the interaction term between the system and the environment. In Ref. \cite{34a}, it has been shown that the reduced dynamics of the system $S$ is CP for arbitrary system-environment initial state  $\rho_{SE} \in\mathcal{D} $ if and only if 
$H_{int}=0$. In other words, the reduced dynamics is CP for arbitrary  $\rho_{SE}$ if and only if 
\begin{equation}
\label{eq:3a}
U=e^{-iHt}=e^{-iH_St}\otimes e^{-iH_Et}=U_S\otimes U_E.
\end{equation} 

In this paper, we generalize the above result to include arbitrary system-environment  $U$, not only those   which can be written as $U=e^{-iHt}$, and to include arbitrary linear reduced dynamics, not only the CP ones. In other words, we show that  the reduced dynamics of the system $S$ is linear for arbitrary system-environment initial state  $\rho_{SE} \in\mathcal{D} $ if and only if the whole system-environment unitary evolution  is factorisable as $U=U_S\otimes U_E$.

This result is given in sections ~\ref{sec:C} and ~\ref{sec:D}. Before, in Sec. ~\ref{sec:B}, we prove a Lemma needed to achieve the results of sections ~\ref{sec:C} and ~\ref{sec:D}. 

We give our next result in Sec.~\ref{sec:E}: We consider the case that  the set of possible initial states of the system-environment $\mathcal{S}$ is convex and $\mathrm{Tr}_{E} \ \mathcal{S}= \mathcal{D}_S$,    where $\mathcal{D}_S$ is the set of all system  density operators. We show that if the reduced dynamics of the system $S$, for one system-environment unitary evolution $U_1$, is positive but non-CP, then the reduced dynamics  is surely  non-linear for another $U_2$

 Finally, we end our paper in Sec.~\ref{sec:F}  with a summary of our results.

\section{$U$-invariance of the kernel of the partial trace over the environment}\label{sec:B} 

Consider the set $\mathcal{S}=\left\lbrace \rho_{SE} \right\rbrace $  of possible initial states of the system-environment (in an experiment). Each $\rho_{SE} \in \mathcal{S}$ can be written as 
\begin{equation}
\label{eq:4a}
\rho_{SE}= \rho_{S}\otimes\tilde{\omega}_E + R(\rho_{SE}),
\end{equation} 
where $\rho_{S}=\mathrm{Tr}_{E}(\rho_{SE})$ is the initial state of the system,  $ \tilde{\omega}_E $ is some fixed  state of the environment and $R(\rho_{SE})$ is a Hermitian operator such that $\mathrm{Tr}_{E} (R)=0$. In   Eq. (\ref{eq:4a}), we explicitly denote the dependence of $R$ on $\rho_{SE}$.

So, from Eq. (\ref{eq:1a}), the final state of the system is given by

\begin{equation}
\label{eq:5a}
\rho_{S}^\prime=\mathcal{E}_S (\rho_{S})+ \mathrm{Tr}_{E}(R^\prime),
\end{equation}
where $\mathcal{E}_S (*) =\mathrm{Tr}_{E} \circ \mathrm{Ad}_U (* \otimes \tilde{\omega}_E)$ is a CP map and $R^\prime = \mathrm{Ad}_U (R)$. Therefore, for a given $U$, if  $\mathrm{Tr}_{E}(R^\prime)=0$ for all $\rho_{SE} \in \mathcal{S}$, then the reduced dynamics is given by the CP map $\mathcal{E}_S$. 

Now, we can restate our   question as the following: Assuming that $\mathcal{S}=\mathcal{D}$ and $U$ is non-factorisable, whether $\mathrm{Tr}_{E}(R^\prime)=0$ for all $\rho_{SE} \in \mathcal{S}$ or not.

The kernel of a linear map  is the set of all inputs which are mapped to zero by this map. So, the kernel of the partial trace over the environment  is
\begin{equation}
\label{eq:6a}
\mathrm{ker} \ \mathrm{Tr}_E = \left\lbrace Y \in \mathcal{L}(\cH_S \otimes \cH_E) \mid  \mathrm{Tr}_E(Y)=0 \right\rbrace, 
\end{equation} 
where $\cH_S$ and $\cH_E$ are finite-dimensional Hilbert spaces of the system and the environment, respectively, and $\mathcal{L}(\cH)$ is the vector space of linear operators on the Hilbert space $\cH$.

Let's denote the set $\mathcal{R}$ as the set of all Hermitian operators $R$ in Eq. (\ref{eq:4a}), for all  $\rho_{SE} \in \mathcal{S}$. Obviously, $\mathcal{R}$ is a proper subset of $\mathrm{ker} \ \mathrm{Tr}_E$, and our  question is whether the set $\mathrm{Ad}_U \ \mathcal{R}$ is also a subset of $\mathrm{ker} \ \mathrm{Tr}_E$.
In this section, we focus on a more general question: Whether the set $\mathrm{Ad}_U \ \mathrm{ker} \ \mathrm{Tr}_E$ is a  subset of $\mathrm{ker} \ \mathrm{Tr}_E$, and we return to our  question in the next section.

\begin{lemma}
\label{lem1}
 $\mathrm{Tr}_{E} \circ \mathrm{Ad}_U (Y)=0$ for all  $Y \in \mathrm{ker} \ \mathrm{Tr}_E$, if and only if $U$ is factorisable as $U_S \otimes U_E$.
 \end{lemma}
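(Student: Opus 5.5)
The "if" direction is immediate. If $U=U_S\otimes U_E$, then for any $Y$ we have
\[
\mathrm{Tr}_E(UYU^\dagger)=U_S\,\mathrm{Tr}_E\big((I_S\otimes U_E)Y(I_S\otimes U_E^\dagger)\big)U_S^\dagger=U_S\,\mathrm{Tr}_E(Y)\,U_S^\dagger ,
\]
by cyclicity of the partial trace in the environment factor. Hence $\mathrm{ker}\ \mathrm{Tr}_E$ is mapped into itself. The substance is therefore the converse, and I would prove it by translating the hypothesis into a statement about the adjoint map in the Heisenberg picture.

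First, characterise $\mathrm{ker}\ \mathrm{Tr}_E$ through its orthogonal complement under the Hilbert--Schmidt inner product. Since $\mathrm{Tr}[(A\otimes I_E)Y]=\mathrm{Tr}[A\,\mathrm{Tr}_E(Y)]$, the kernel is exactly the orthogonal complement of the subalgebra $\mathcal{L}(\cH_S)\otimes I_E$. By dimension counting (finite dimensions), $(\mathrm{ker}\ \mathrm{Tr}_E)^\perp=\mathcal{L}(\cH_S)\otimes I_E$.

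Next, the hypothesis $\mathrm{Tr}_E\circ\mathrm{Ad}_U(\mathrm{ker}\ \mathrm{Tr}_E)=0$ says $\mathrm{Ad}_U(\mathrm{ker}\ \mathrm{Tr}_E)\subseteq \mathrm{ker}\ \mathrm{Tr}_E$. The map $\mathrm{Ad}_U$ is unitary for the Hilbert--Schmidt inner product and injective. Since the spaces are finite-dimensional, the inclusion is then an equality, so $\mathrm{Ad}_U$ also preserves the orthogonal complement. That is, $U(A\otimes I_E)U^\dagger\in\mathcal{L}(\cH_S)\otimes I_E$ for all $A$. Equivalently, $\mathrm{Ad}_{U}$ restricts to a unital $*$-automorphism of the algebra $\mathcal{L}(\cH_S)\otimes I_E\cong\mathcal{L}(\cH_S)$. (Alternatively, one can phrase this as: the adjoint of $\mathrm{Tr}_E\circ\mathrm{Ad}_U$ maps $A$ to $U^\dagger(A\otimes I)U$, and the hypothesis forces its range into $\mathcal{L}(\cH_S)\otimes I_E$. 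Either route gives the same conclusion.)

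Finally, extract the factorisation. Every automorphism of the full matrix algebra $\mathcal{L}(\cH_S)$ is inner (Skolem--Noether). So there is a unitary $U_S$ with $U(A\otimes I_E)U^\dagger=U_SAU_S^\dagger\otimes I_E$ for all $A$. Then $V=(U_S^\dagger\otimes I_E)U$ commutes with every $A\otimes I_E$. The commutant of $\mathcal{L}(\cH_S)\otimes I_E$ is $I_S\otimes\mathcal{L}(\cH_E)$, so $V=I_S\otimes U_E$, with $U_E$ unitary because $V$ is. Hence $U=U_S\otimes U_E$.

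I expect the main obstacle to be this last step: justifying that the automorphism is inner and computing the commutant. To keep the argument self-contained, I would handle both using matrix units $|i\rangle\langle j|\otimes I_E$. Their images are again a system of matrix units, of the form $F_{ij}\otimes I_E$. Choosing $U_S$ to send $|i\rangle$ to the appropriate basis built from the ranges of the $F_{ii}$ makes the inner-ness explicit. Alternatively, one can simply cite Skolem--Noether and the standard commutant theorem.
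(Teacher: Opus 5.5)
Your proof is correct, and it takes a genuinely different route from the paper.

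\textbf{The paper's route.} It works in the Schr\"odinger picture with the test operators $Y=A_S\otimes(\vert\psi_{1E}\rangle\langle\psi_{1E}\vert-\vert\psi_{2E}\rangle\langle\psi_{2E}\vert)$. The hypothesis then says that the channel $\mathrm{Tr}_E\circ\mathrm{Ad}_U(\,\cdot\otimes\vert\psi_E\rangle\langle\psi_E\vert)$ does not depend on $\vert\psi_E\rangle$. The unitary freedom of Kraus representations turns this into $B_l\vert\psi_{1E}\rangle=V_EB_l\vert\psi_{2E}\rangle$ for the expansion $U=\sum_l G_l\otimes B_l$. A numerical-range argument then forces $B_m^\dagger B_l\propto I_E$, so every $B_l$ is a multiple of one unitary $U_E$.

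\textbf{Your route.} You dualise instead. Since $\mathrm{Ad}_U$ is Hilbert--Schmidt unitary, invariance of $\mathrm{ker}\,\mathrm{Tr}_E=(\mathcal{L}(\cH_S)\otimes I_E)^\perp$ gives invariance of $\mathcal{L}(\cH_S)\otimes I_E$; that is, $U$ normalises the system observable algebra. Factorisation then follows from two facts: $*$-automorphisms of a full matrix algebra are inner, and the commutant of $\mathcal{L}(\cH_S)\otimes I_E$ is $I_S\otimes\mathcal{L}(\cH_E)$.

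\textbf{Comparison.} Your argument is shorter and shows the structural reason for the lemma, namely a Heisenberg-picture characterisation of factorisable $U$. The price is that it relies on Skolem--Noether and the commutant theorem. The paper's argument is more hands-on: it uses only Kraus unitary freedom and eigenvalue arguments. It also produces the physically suggestive intermediate statement that the reduced channel for product initial states cannot depend on the environment's pure state.

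\textbf{A point to tighten.} Skolem--Noether alone gives only an invertible implementer $W$ with $\alpha(A)=WAW^{-1}$. You get a unitary because $\alpha$ preserves adjoints: this makes $W^\dagger W$ commute with every $A$, so $W^\dagger W$ is a positive scalar and $W$ can be rescaled to a unitary $U_S$. In your matrix-unit version, do not take independent unit vectors in the ranges of the $F_{ii}$. Instead take a unit vector $f_1$ in the range of $F_{11}$ and set $f_i=F_{i1}f_1$. Then $F_{ij}=\vert f_i\rangle\langle f_j\vert$ holds with the correct phases, and $U_S\vert i\rangle=f_i$ is the required unitary.
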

\textit{Proof.}
When $U$ is factorisable, then, trivially, $\mathrm{Tr}_{E} \circ \mathrm{Ad}_U (Y)=0$ for each  $Y \in \mathrm{ker} \ \mathrm{Tr}_E$. Therefore, in the following, we focus on the  reverse.

We choose a $Y \in \mathrm{ker} \ \mathrm{Tr}_E$ as  
\begin{equation}
\label{eq:7a}
Y=A_S\otimes ( \vert \psi_{1E}\rangle\langle \psi_{1E}\vert - \vert \psi_{2E}\rangle\langle\psi_{2E}\vert), 
\end{equation}
where $A_S \in \mathcal{L}(\cH_S)$ and $ \vert \psi_{1E}\rangle, \  \vert \psi_{2E}\rangle \in \cH_E$ are normalized kets. So, the condition  $\mathrm{Tr}_{E} \circ \mathrm{Ad}_U (Y)=0$ reads
\begin{equation}
\label{eq:8a}
\mathcal{E}_{1S}(A_S)=\mathcal{E}_{2S}(A_S), 
\end{equation} 
where $\mathcal{E}_{1S} (*) =\mathrm{Tr}_{E} \circ \mathrm{Ad}_U (* \otimes  \vert \psi_{1E}\rangle\langle \psi_{1E}\vert)$ and $\mathcal{E}_{2S} (*) =\mathrm{Tr}_{E} \circ \mathrm{Ad}_U (* \otimes  \vert \psi_{2E}\rangle\langle \psi_{2E}\vert)$ are CP maps. Since $A_S$ in Eq. (\ref{eq:7a}) is arbitrary, Eq. (\ref{eq:8a}) means that the two CP maps $\mathcal{E}_{1S}$ and $\mathcal{E}_{2S}$ are the same.

Next, we write the operator sum  representations of $\mathcal{E}_{1S}$ and $\mathcal{E}_{2S}$ as
\begin{equation}
\label{eq:9a}
\begin{aligned}
\mathcal{E}_{1S}(*)=\sum_{i}E_{i}\, * \,E_{i}^{\dagger},\ \ \ \sum_{i}E_{i}^{\dagger}E_{i}=I_{S}, \\
\mathcal{E}_{2S}(*)=\sum_{j}F_{j}\, * \,F_{j}^{\dagger},\ \ \ \sum_{j}F_{j}^{\dagger}F_{j}=I_{S},
\end{aligned}
\end{equation}
where $I_S$ is the identity operator on $\cH_S$, $E_i= \langle i_{E}\vert U\vert \psi_{1E}\rangle$ and $F_j= \langle j_{E}\vert U\vert \psi_{2E}\rangle$ are linear operators on $\cH_S$, and $\left\lbrace \vert i_{E}\rangle \right\rbrace$ is an orthonormal basis of  $\cH_E$.
Since  $\mathcal{E}_{1S}=\mathcal{E}_{2S}$, using Theorem 8.2 of Ref. \cite{1a}, we have $E_i=\sum_j \tilde{v}_{ij} F_j$, where $\tilde{v}_{ij}$ are elements of a unitary matrix. In other words, we have
\begin{equation}
\label{eq:10a}
\begin{aligned}
\langle i_{E}\vert U\vert \psi_{1E}\rangle=\sum_j \tilde{v}_{ij} \langle j_{E}\vert U\vert \psi_{2E}\rangle \ \\
 = \langle \tilde{i}_{E}\vert U\vert \psi_{2E}\rangle \qquad\quad \\
= \langle i_{E}\vert  V_E  U\vert \psi_{2E}\rangle, \quad  \
\end{aligned}
\end{equation}
where $\vert \tilde{i}_{E}\rangle = \sum_j \tilde{v}_{ij}^* \vert j_{E}\rangle =  V_E^\dagger \vert i_{E}\rangle  $ and $V_E$ is a unitary operator on  $\cH_E$. Note that $V_E$ depends on the operators $E_i$ and $F_j$. In other words, it depends on the kets $\vert \psi_{1E}\rangle$ and  $\vert \psi_{2E}\rangle$.

Then, we decompose the unitary operator on the whole system-environment $U$ as
\begin{equation}
\label{eq:11a}
U=\sum_l G_l \otimes B_l, 
\end{equation}
where $\left\lbrace G_l \right\rbrace $ is an orthonormal basis of  $\mathcal{L}(\cH_S)$, with respect to
the Hilbert–Schmidt inner product, and $B_l$ are linear operators in  $\mathcal{L}(\cH_E)$. Inserting Eq. (\ref{eq:11a}) into Eq. (\ref{eq:10a}), we have
\begin{equation}
\label{eq:12}
\begin{aligned}
\sum_l G_l  \ \langle i_{E}\vert B_l\vert \psi_{1E}\rangle=\sum_l G_l  \  \langle i_{E}\vert V_E B_l\vert \psi_{2E}\rangle.
\end{aligned}
\end{equation}
Since $G_l$ are linearly independent, and $\left\lbrace \vert i_{E}\rangle \right\rbrace$ is an orthonormal basis of  $\cH_E$, we conclude that (for all $l$)
\begin{equation}
\label{eq:13}
B_l\vert \psi_{1E}\rangle = V_E B_l\vert \psi_{2E}\rangle.
\end{equation}
Therefore, for all $l$ and $m$, we have 
\begin{equation}
\label{eq:14}
\langle \psi_{1E}\vert B_m^\dagger B_l\vert \psi_{1E}\rangle =\langle \psi_{2E}\vert B_m^\dagger B_l\vert \psi_{2E}\rangle.
\end{equation}
Note that, in Eq. (\ref{eq:7a}) and so in  Eq. (\ref{eq:14}), we can arbitrarily change  $\vert \psi_{2E}\rangle$   for a fixed $\vert \psi_{1E}\rangle$. This means that
\begin{equation}
\label{eq:15}
\langle \psi_{E}\vert B_m^\dagger B_l\vert \psi_{E}\rangle =b_{ml},
\end{equation}
where $\vert \psi_{E}\rangle$ is any arbitrary (normalized) state in $\cH_E$ and $b_{ml}$ is some fixed complex scalar independent of  $\vert \psi_{E}\rangle$. (Obviously, $b_{ll}$ are positive.)

We first choose $l=m$ in  Eq. (\ref{eq:15}). So, for the positive operator $B_l^\dagger B_l$, choosing $\vert \psi_{E}\rangle$ to be its eigenvectors,  we conclude that all the eigenvalues of $B_l^\dagger B_l$ are the same and equal to  $b_{ll}$. In other words, $B_l^\dagger B_l=b_{ll} I_E$, where $I_E$ is the identity operator on $\cH_E$. Therefore, $B_l =\sqrt{b_{ll}} U_l$ with some unitary operator $U_l$ on $\cH_E$.

Now,  Eq. (\ref{eq:15}), for $l \neq m$,  reads
\begin{equation}
\label{eq:16}
\langle \psi_{E}\vert U_m^\dagger U_l\vert \psi_{E}\rangle = \frac{b_{ml}}{\sqrt{b_{ll}b_{mm}}}=\tilde{b}_{ml}.
\end{equation}
Note that $U_m^\dagger U_l$ is a unitary operator. Again, choosing $\vert \psi_{E}\rangle$ to be its eigenvectors,  we conclude that all the eigenvalues of $U_m^\dagger U_l$ are the same and equal to the fixed  phase factor $\tilde{b}_{ml}$.
Therefore,  $U_m^\dagger U_l=\tilde{b}_{ml} I_E$. Consequently, up to a phase factor, all $ U_l$ are the same as some fixed unitary operator $U_E$.

In summary, all $B_l$ in Eq. (\ref{eq:11a}), up to a phase factor, are equal to  $\sqrt{b_{ll}}U_E$, where $U_E$ is some fixed unitary operator. Inserting these  $B_l$ in Eq. (\ref{eq:11a}), we conclude that $U$ is factorized as  $U=U_S\otimes U_E$. $\qquad\qquad\qquad\qquad\qquad\blacksquare$

The above Lemma is in fact the same as Lemma 2 of Ref. \cite{7a}. So, our proof is another (maybe simpler) proof for it. 
In addition, a special case of Lemma \ref{lem1}, i.e., when $U=e^{-iHt}$ with the time-independent Hamiltonian $H$ in  Eq. (\ref{eq:2a}), has been proved in Ref. \cite{34a}.

Note that Lemma \ref{lem1} means that, for a  non-factorisable $U$, there exists at least one linear operator   
 $Y \in \mathrm{ker} \ \mathrm{Tr}_E$ such that  $\mathrm{Tr}_{E} \circ \mathrm{Ad}_U (Y) \neq 0$ 
But, Lemma \ref{lem1} does not guarantee that this $Y$ is also a valid $R$ in Eq. (\ref{eq:4a}).
The Hermitian operator $R$ in Eq. (\ref{eq:4a}) is such that adding it to $\rho_{S}\otimes\tilde{\omega}_E$ results in a valid density operator $\rho_{SE}$.

The authors of Ref. \cite{34a} tried to justify that some of $Y \in \mathrm{ker} \ \mathrm{Tr}_E$ for which $\mathrm{Tr}_{E} \circ \mathrm{Ad}_U (Y) \neq 0$  are valid operators $R$ in Eq. (\ref{eq:4a}).
But, we think that a   rigorous proof is needed which we will give in the next section.

\section{Deviation from completely positive reduced dynamics}\label{sec:C}

Consider the set $\mathbb{S}_S=\left\lbrace \rho_S^{(l)} \right\rbrace $, including ${(d_S)}^2$ linearly independent states $ \rho_S^{(l)} \in \mathcal{D}_S$. ($d_S$ is the dimension of $\cH_S$.)
So, each $A_S \in \mathcal{L}(\cH_S)$ can be expanded as 
\begin{equation}
\label{eq:17}
\begin{aligned}
A_{S}=\sum_{l=1}^{{(d_S)}^2} c_l \rho_{S}^{(l)},
\end{aligned}
\end{equation}  
   where $c_l$ are unique complex coefficients. In particular,  each $ \rho_S  \in \mathcal{D}_S$ can be expanded as 
\begin{equation}
\label{eq:18}
\begin{aligned}
\rho_{S}=\sum_{l=1}^{{(d_S)}^2} a_l \rho_{S}^{(l)},
\end{aligned}
\end{equation}  
  with    real coefficients  $a_l$.

Linear independence of  $\rho_{S}^{(l)} \in \mathbb{S}_S$ results in the linear independence of  $\rho_{S}^{(l)}\otimes \tilde{\omega}_E$, where  $ \tilde{\omega}_E $ is a fixed  state of the environment and $1 \leq l \leq {(d_S)}^2$. We can add $ {(d_S)}^2{(d_E)}^2 -{(d_S)}^2 $ other linearly independent states $\rho_{SE}^{(l)} \in  \mathcal{D}$, where ${(d_S)}^2 +1 \leq l \leq {(d_S)}^2{(d_E)}^2 $ and $d_E$ is the dimension of $\cH_E$, to construct the set $\mathbb{S}=\left\lbrace  \rho_{S}^{(1)}\otimes \tilde{\omega}_E, \dots , \rho_{S}^{({(d_S)}^2 )}\otimes \tilde{\omega}_E, \rho_{SE}^{({(d_S)}^2 +1)}, \dots , \rho_{SE}^{({(d_S)}^2{(d_E)}^2)} \right\rbrace $ as a basis of $\mathcal{L}(\cH_S \otimes \cH_E)$.

Note that, from  Eq. (\ref{eq:18}), for each $\rho_{SE}^{(i)} \in \mathbb{S} $ and $i  > {(d_S)}^2$, we have
\begin{equation}
\label{eq:19}
\begin{aligned}
\rho_{S}^{(i)}=\mathrm{Tr}_{E}(\rho_{SE}^{(i)}) =\sum_{l=1}^{{(d_S)}^2} a_l^{(i)} \rho_{S}^{(l)}.
\end{aligned}
\end{equation}
Therefore, from Eq. (\ref{eq:4a}), we have
\begin{equation}
\label{eq:20}
\begin{aligned}
\rho_{SE}^{(i)}=\rho^{(i)}_{S}\otimes\tilde{\omega}_E + R^{(i)}=\sum_{l=1}^{{(d_S)}^2} a_l^{(i)} \rho_{S}^{(l)}\otimes \tilde{\omega}_E + R^{(i)},
\end{aligned}
\end{equation}
where  $R^{(i)}$ is a Hermitian operator in  $\mathcal{L}(\cH_S \otimes \cH_E)$ such that $\mathrm{Tr}_{E} (R^{(i)})=0$.

The set $\mathbb{S} $ is a basis of $\mathcal{L}(\cH_S \otimes \cH_E)$. So, we can expand each $Y \in \mathrm{ker} \ \mathrm{Tr}_E$ as
\begin{equation}
\label{eq:21}
\begin{aligned}
Y=\sum_{l=1}^{{(d_S)}^2} d_l  \rho_{S}^{(l)}\otimes \tilde{\omega}_E +\sum_{l={(d_S)}^2 +1}^{{(d_S)}^2{(d_E)}^2} d_l\rho_{SE}^{(l)},
\end{aligned}
\end{equation}
 with   complex coefficients  $d_l$. Inserting   Eq. (\ref{eq:20}) (that is  for $ \rho_{SE}^{(l)}$  with $l  > {(d_S)}^2$) into  Eq. (\ref{eq:21}), we have
\begin{equation}
\label{eq:22}
\begin{aligned}
Y=\sum_{l=1}^{{(d_S)}^2} d_l  \rho_{S}^{(l)}\otimes \tilde{\omega}_E  \qquad\qquad\qquad\qquad\qquad\quad   \\
+\sum_{l={(d_S)}^2 +1}^{{(d_S)}^2{(d_E)}^2} d_l (\sum_{i=1}^{{(d_S)}^2} a_i^{(l)} \rho_{S}^{(i)}\otimes \tilde{\omega}_E + R^{(l)})   \\
=\sum_{l=1}^{{(d_S)}^2} \tilde{d}_l  \rho_{S}^{(l)}\otimes \tilde{\omega}_E + \sum_{l={(d_S)}^2 +1}^{{(d_S)}^2{(d_E)}^2} d_l  R^{(l)}. \qquad
\end{aligned}
\end{equation}
Tracing over the environment results that
\begin{equation}
\label{eq:23}
\begin{aligned}
0= \mathrm{Tr}_{E} (Y)=
 \sum_{l=1}^{{(d_S)}^2} \tilde{d}_l  \rho_{S}^{(l)}.
\end{aligned}
\end{equation}
Since all $\rho_{S}^{(l)} \in \mathbb{S}_S$ are linearly independent, all the coefficients $ \tilde{d}_l$ are zero. So, from Eq. (\ref{eq:22}), we conclude that
\begin{equation}
\label{eq:24}
\begin{aligned}
Y= \sum_{l={(d_S)}^2 +1}^{{(d_S)}^2{(d_E)}^2} d_l  R^{(l)}, 
\end{aligned}
\end{equation}
 with  the coefficients  $d_l$ in  Eq. (\ref{eq:21}).

Consequently, we have
\begin{equation}
\label{eq:25}
\begin{aligned}
Y_S^\prime=\mathrm{Tr}_{E} \circ\mathrm{Ad}_U (Y)= \sum_{l={(d_S)}^2 +1}^{{(d_S)}^2{(d_E)}^2} d_l  R_S^{\prime(l)}, 
\end{aligned}
\end{equation}
where $R_S^{\prime(l)}=\mathrm{Tr}_{E} \circ\mathrm{Ad}_U ( R^{(l)})$. Obviously, if $Y_S^\prime \neq 0$, then at least one of $R_S^{\prime(l)}$ must be nonzero.

Now, consider one $\rho_{SE}^{(i)}$ in Eq. (\ref{eq:20}) for which $R_S^{\prime(i)} \neq 0$. Using  Eqs. (\ref{eq:1a}) and  (\ref{eq:5a}), we have
\begin{equation}
\label{eq:26}
\begin{aligned}
\rho_{S}^{\prime(i)}=\mathrm{Tr}_{E} \circ\mathrm{Ad}_U (\rho_{SE}^{(i)})=\mathcal{E}_S (\rho_{S}^{(i)})+ R_S^{\prime(i)}.
\end{aligned}
\end{equation}
The nonzero term $R_S^{\prime(i)}$ is the deviation from CP reduced dynamics $\mathcal{E}_S (*) =\mathrm{Tr}_{E} \circ \mathrm{Ad}_U (* \otimes \tilde{\omega}_E)$.

In summary, we have proved the following Proposition.
\begin{prop}
\label{pro:1}
Consider a non-factorisable unitary evolution of the whole system-environment $U$. According to Lemma \ref{lem1}, there exists at least one  $Y \in \mathrm{ker} \ \mathrm{Tr}_E$ such that $Y_S^\prime=\mathrm{Tr}_{E} \circ\mathrm{Ad}_U (Y) \neq 0$. 
 So, from Eqs. (\ref{eq:25}) and  (\ref{eq:26}), we conclude that there exists at least one  $\rho_{SE}^{(i)} \in \mathbb{S}$ such that its reduced dynamics in Eq.  (\ref{eq:26}) deviates from the CP map $\mathcal{E}_S (*) =\mathrm{Tr}_{E} \circ \mathrm{Ad}_U (* \otimes \tilde{\omega}_E)$. 
\end{prop}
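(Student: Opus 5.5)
The plan is to run a contrapositive argument that combines Lemma \ref{lem1} with the basis construction that precedes the statement. Suppose, for contradiction, that every $\rho_{SE}^{(i)}\in\mathbb{S}$ has reduced dynamics given exactly by $\mathcal{E}_S$. For the first $(d_S)^2$ elements, $\rho_S^{(l)}\otimes\tilde{\omega}_E$, this holds by definition. For the remaining ones, $i>(d_S)^2$, Eq. (\ref{eq:26}) shows that the assumption is equivalent to $R_S^{\prime(i)}=\mathrm{Tr}_E\circ\mathrm{Ad}_U(R^{(i)})=0$. The goal is then to show that this forces $\mathrm{Tr}_E\circ\mathrm{Ad}_U$ to vanish on all of $\mathrm{ker}\,\mathrm{Tr}_E$. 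By Lemma \ref{lem1}, that would make $U$ factorisable, which contradicts the hypothesis.

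The key steps, in order, are as follows.

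First, fix the basis $\mathbb{S}$ of $\mathcal{L}(\cH_S\otimes\cH_E)$ made of density operators. Such a basis exists: since $\mathcal{D}$ spans $\mathcal{L}(\cH_S\otimes\cH_E)$, the linearly independent family $\rho_S^{(l)}\otimes\tilde{\omega}_E$ can be extended to a basis by elements of $\mathcal{D}$.

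Second, take an arbitrary $Y\in\mathrm{ker}\,\mathrm{Tr}_E$ and expand it as in Eq. (\ref{eq:21}). Substituting the decomposition (\ref{eq:20}) and taking the partial trace gives Eq. (\ref{eq:23}). Linear independence of $\mathbb{S}_S$ then kills the product-state part, which leaves Eq. (\ref{eq:24}): every $Y$ in the kernel is a complex combination of the finitely many $R^{(l)}$ with $l>(d_S)^2$.

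Third, apply the linear map $\mathrm{Tr}_E\circ\mathrm{Ad}_U$ to obtain Eq. (\ref{eq:25}). If all $R_S^{\prime(l)}$ vanished, then $Y_S^\prime=0$ for every $Y$ in the kernel. Lemma \ref{lem1} would then give $U=U_S\otimes U_E$, a contradiction. Hence some $R_S^{\prime(i)}\neq0$, and by Eq. (\ref{eq:26}) the corresponding genuine state $\rho_{SE}^{(i)}$ has $\rho_S^{\prime(i)}\neq\mathcal{E}_S(\rho_S^{(i)})$.

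The main subtlety, rather than any computational obstacle, is the point raised after Lemma \ref{lem1}. The lemma only produces some operator $Y$ in the kernel with nonzero image, and that $Y$ need not itself be a legitimate correlation term $R$ of a density operator. The step that resolves this is the spanning argument in the second step: the legitimate $R^{(l)}$, which come from actual states in $\mathbb{S}$, already span $\mathrm{ker}\,\mathrm{Tr}_E$ over $\mathbb{C}$. Linearity then transfers nonvanishing from $Y$ to at least one $R^{(l)}$. Care is needed only in checking that the expansion coefficients $\tilde d_l$ genuinely vanish, which uses uniqueness of the expansion (\ref{eq:17}). Complex coefficients cause no problem, because the conclusion only concerns the nonvanishing of one real-Hermitian term $R_S^{\prime(i)}$.
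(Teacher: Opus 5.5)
Your proposal is correct and follows essentially the paper's argument. You extend $\{\rho_S^{(l)}\otimes\tilde{\omega}_E\}$ to a basis $\mathbb{S}$ of states, use Eqs. (\ref{eq:21})--(\ref{eq:24}) to show that the genuine correlation terms $R^{(l)}$ span $\mathrm{ker}\,\mathrm{Tr}_E$, and then transfer the nonvanishing of $Y_S^\prime$ from Lemma \ref{lem1} to some $R_S^{\prime(i)}$ by linearity; the only differences are cosmetic (contradiction framing, and an explicit justification that such a basis of density operators exists).
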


Yet, we have focused on  deviation from CP reduced dynamics. Proposition \ref{pro:1} can be generalized readily to include all linear (Hermitian) reduced dynamics too. This generalization will be given in the next section.

\section{Deviation from linear reduced dynamics}\label{sec:D}

Again, consider the set $\mathbb{S}_S=\left\lbrace \rho_S^{(l)} \right\rbrace $, including ${(d_S)}^2$ linearly independent states $ \rho_S^{(l)} \in \mathcal{D}_S$. Next, consider the set $\hat{\mathbb{S}} =\left\lbrace \rho_{SE}^{(l)} \right\rbrace $, including ${(d_S)}^2$ system-environment states $ \rho_{SE}^{(l)}$ such that $\rho_S^{(l)} = \mathrm{Tr}_{E} (\rho_{SE}^{(l)})$. Therefore, the  states $ \rho_{SE}^{(l)}$ are also linearly independent, otherwise, the  states $ \rho_{S }^{(l)}$ wouldn't be so. 
Note that, in this section, we do not restrict ourselves to the case that, for $1 \leq l \leq {(d_S)}^2$,   $ \rho_{SE}^{(l)}=\rho_S^{(l)}\otimes \tilde{\omega}_E$. The only restriction is that $\rho_S^{(l)} = \mathrm{Tr}_{E} (\rho_{SE}^{(l)})$,  otherwise, $ \rho_{SE}^{(l)}$ are arbitrary.

From Eq. (\ref{eq:17}), we know that each linear operator $A_S \in \mathcal{L}(\cH_S)$ can be expanded using  $\mathbb{S}_S$. Now, we define the linear trace-preserving \textit{assignment map} $\Lambda_S$ as
\begin{equation}
\label{eq:27}
\begin{aligned}
\Lambda_S(A_{S})=\sum_{l=1}^{{(d_S)}^2} c_l \Lambda_S(\rho_{S}^{(l)})=\sum_{l=1}^{{(d_S)}^2} c_l \rho_{SE}^{(l)}.
\end{aligned}
\end{equation} 
The  assignment map $\Lambda_S$ is Hermitian by construction: When $A_{S}$ is a Hermitian operator,    the coefficients  $c_l$ in  Eq. (\ref{eq:17}) are real, and so $\Lambda_S(A_{S})$ in the above equation is also a Hermitian operator.

We can define the subspace $\mathcal{V}  \subset \mathcal{L}(\cH_S \otimes \cH_E)$ as the subspace spanned by the states $ \rho_{SE}^{(l)} \in \hat{\mathbb{S}} $, i.e., for each $X\in \mathcal{V}$, we have
\begin{equation}
\label{eq:28}
\begin{aligned}
X=\sum_{l=1}^{{(d_S)}^2} e_l \rho_{SE}^{(l)},
\end{aligned}
\end{equation}
 where $e_l$ are unique complex coefficients. From Eq. (\ref{eq:27}), we see that $\Lambda_S(A_{S})\in \mathcal{V}$. In other words,   $\Lambda_S$ maps $ \mathcal{L}(\cH_S)$ to $\mathcal{V}$.

Consider the set of possible  initial states of the system-environment as $\mathcal{S}=\mathcal{V}\cap\mathcal{D}$. So, for each initial state  $ \rho_{SE}  \in\mathcal{S}$, we have
\begin{equation}
\label{eq:29}
\begin{aligned}
\rho_{SE}=\sum_{l=1}^{{(d_S)}^2} \hat{a}_l \rho_{SE}^{(l)}, \qquad\qquad\quad \\
\rho_{S}= \mathrm{Tr}_{E} (\rho_{SE})=\sum_{l=1}^{{(d_S)}^2} \hat{a}_l \rho_{S}^{(l)},
\end{aligned}
\end{equation}
 with    real coefficients  $\hat{a}_l$. 
  Now, using Eqs. (\ref{eq:1a}), (\ref{eq:27}) and (\ref{eq:29}), we have
\begin{equation}
\label{eq:30}
\rho_{S}^{\prime}=\mathrm{Tr}_{E} \circ \mathrm{Ad}_U (\rho_{SE})=\mathrm{Tr}_{E} \circ \mathrm{Ad}_U \circ\Lambda_S(\rho_{S}) \equiv \Phi_S(\rho_{S}).
\end{equation}
Since $\mathrm{Tr}_{E}$ and $\mathrm{Ad}_U$ are CP maps \cite{1a}, and  $\Lambda_S$ was constructed as a Hermitian map in Eq. (\ref{eq:27}), the reduced dynamics $\Phi_S=\mathrm{Tr}_{E} \circ \mathrm{Ad}_U \circ\Lambda_S$ is a (linear trace-preserving) Hermitian map, in general.

Note that $\Phi_S$ in Eq. (\ref{eq:30}) gives the reduced dynamics correctly only for the system-environment initial states  $\rho_{SE}\in\mathcal{S}=\mathcal{V}\cap\mathcal{D}$. In other words,  $\Phi_S$ gives the reduced dynamics correctly only for the system  initial states $\rho_{S}\in\mathcal{S}_S=\mathrm{Tr}_{E} \ \mathcal{S} = \mathrm{Tr}_{E} (\mathcal{V}\cap\mathcal{D})$, and  in general  $\mathrm{Tr}_{E} (\mathcal{V}\cap\mathcal{D}) \neq \mathcal{D}_S$.
Only when $\Lambda_S$ is a positive map on the whole $ \mathcal{L}(\cH_S)$, we have $\mathcal{S}_S=\mathrm{Tr}_{E} (\mathcal{V}\cap\mathcal{D})=\mathcal{D}_S$. (We will come back to this point in the next section.)
It can be shown that when  $\Lambda_S$ in Eq. (\ref{eq:27}) is   positive, then it is CP too, and $\Lambda_S(*)=(*)\otimes \tilde{\omega}_E$, where $ \tilde{\omega}_E $ is some fixed state of the environment \cite{5a, 35, 36}.
 Consequently, the reduced dynamics in Eq. (\ref{eq:30}) is also CP as $\mathcal{E}_S (*) =\mathrm{Tr}_{E} \circ \mathrm{Ad}_U (* \otimes \tilde{\omega}_E)$  which is the case studied in the previous section.

 We can add $ {(d_S)}^2{(d_E)}^2 -{(d_S)}^2 $ other linearly independent states $\rho_{SE}^{(l)} \in  \mathcal{D}$ to the set $\hat{\mathbb{S}}$ to construct the set $\tilde{\mathbb{S}}$,  including $ {(d_S)}^2{(d_E)}^2$ linearly independent states, as a basis of $\mathcal{L}(\cH_S \otimes \cH_E)$. 
 For each $\rho_{SE}^{(i)} \in \tilde{\mathbb{S}} $ and $i  > {(d_S)}^2$, since $\mathbb{S}_S$ is a basis of $\mathcal{L}(\cH_S)$, we have
\begin{equation}
\label{eq:31}
\begin{aligned}
\rho_{S}^{(i)}=\mathrm{Tr}_{E}(\rho_{SE}^{(i)}) =\sum_{l=1}^{{(d_S)}^2} a_l^{(i)} \rho_{S}^{(l)},
\end{aligned}
\end{equation}
 with    real coefficients  $a_l^{(i)}$. Then, using the assignment map $\Lambda_S$ in Eq. (\ref{eq:27}), we have
\begin{equation}
\label{eq:32}
\begin{aligned}
\rho_{SE}^{(i)}=\Lambda_S(\rho_{S}^{(i)}) + R^{(i)}=\sum_{l=1}^{{(d_S)}^2} a_l^{(i)} \rho_{SE}^{(l)}  + R^{(i)},
\end{aligned}
\end{equation}
where  $R^{(i)}$ is a Hermitian operator in  $\mathcal{L}(\cH_S \otimes \cH_E)$ such that $\mathrm{Tr}_{E} (R^{(i)})=0$. Note that, since $\rho_{SE}^{(i)}$ is linearly independent from $\rho_{SE}^{(l)}$ with $1 \leq l \leq {(d_S)}^2$, the operator  $R^{(i)}$ is nonzero.

When the initial state of the system-environment is $\rho_{SE}^{(i)}$  with $i  > {(d_S)}^2$, the reduced dynamics of the system, using Eqs. (\ref{eq:1a}), (\ref{eq:30}) and (\ref{eq:32}), is as
\begin{equation}
\label{eq:33}
\rho_{S}^{\prime (i)}=\mathrm{Tr}_{E} \circ \mathrm{Ad}_U (\rho_{SE}^{(i)})=\Phi_S(\rho_{S}^{(i)}) + R_S^{\prime(i)}, 
\end{equation}
where $R_S^{\prime(i)}=\mathrm{Tr}_{E} \circ\mathrm{Ad}_U ( R^{(i)})$.
Therefore, if,  for a given system-environment unitary evolution $U$, we have  $R_S^{\prime(i)}=0$, then, at least for this $U$, we can add $\rho_{SE}^{(i)}$ to the set $\mathcal{S}$ for which the reduced dynamics is given by $\Phi_S$. In fact, we can add $\rho_{SE}^{(i)}$ to the linearly independent set  $\hat{\mathbb{S}}$ and extend the subspace $\mathcal{V}$, and so the set $\mathcal{S}=\mathcal{V}\cap\mathcal{D}$.

Can we follow a similar  procedure for all $\rho_{SE}^{(i)}\in \tilde{\mathbb{S}}$  with $i  > {(d_S)}^2$ ?
In other words, for a given $U$, whether  $R_S^{\prime(i)}=0$ for all $i  > {(d_S)}^2$. We can follow a similar procedure as given from Eq. (\ref{eq:21}) onward, to show that this is not the case for any non-factorisable $U$.

Since the set $\tilde{\mathbb{S}}$ is a basis of $\mathcal{L}(\cH_S \otimes \cH_E)$, we can expand each $Y \in \mathrm{ker} \ \mathrm{Tr}_E$ as
\begin{equation}
\label{eq:34}
\begin{aligned}
Y= \sum_{l=  1}^{{(d_S)}^2 {(d_E)}^2} d_l \rho_{SE}^{(l)},
\end{aligned}
\end{equation}
 with   complex coefficients  $d_l$ and $\rho_{SE}^{(l)}\in \tilde{\mathbb{S}}$. Inserting   Eq. (\ref{eq:32}) (that is  for $ \rho_{SE}^{(l)}$ with $l  > {(d_S)}^2$) into  Eq. (\ref{eq:34}), we have
\begin{equation}
\label{eq:35}
\begin{aligned}
Y=\sum_{l=1}^{{(d_S)}^2} \tilde{d}_l  \rho_{SE}^{(l)} + \sum_{l={(d_S)}^2 +1}^{{(d_S)}^2{(d_E)}^2} d_l  R^{(l)}. 
\end{aligned}
\end{equation}
Again, from $\ \mathrm{Tr}_E (Y)=0$, we conclude that all the coefficients $\tilde{d}_l$ are zero. Therefore
\begin{equation}
\label{eq:36}
\begin{aligned}
Y= \sum_{l={(d_S)}^2 +1}^{{(d_S)}^2{(d_E)}^2} d_l  R^{(l)}, \qquad\qquad\qquad\qquad \\
Y_S^\prime=\mathrm{Tr}_{E} \circ\mathrm{Ad}_U (Y)= \sum_{l={(d_S)}^2 +1}^{{(d_S)}^2{(d_E)}^2} d_l  R_S^{\prime(l)}. 
\end{aligned}
\end{equation}
From Lemma \ref{lem1}, we know that, for a non-factorisable $U$, there exists at least one $Y \in \mathrm{ker} \ \mathrm{Tr}_E$ such that $Y_S^\prime \neq 0$. Therefore, at least one $R_S^{\prime(l)}$ in the above equation is nonzero.
So, the reduced dynamics in Eq. (\ref{eq:33}) deviates from the linear map $\Phi_S$, at least for one $\rho_{SE}^{(i)}\in \tilde{\mathbb{S}}$ with $i  > {(d_S)}^2$.

In summary, we have generalized the Proposition \ref{pro:1} as the following.
\begin{prop}
\label{pro:2}
Consider a non-factorisable unitary evolution of the whole system-environment $U$. According to Lemma \ref{lem1}, there exists at least one  $Y \in \mathrm{ker} \ \mathrm{Tr}_E$ such that $Y_S^\prime=\mathrm{Tr}_{E} \circ\mathrm{Ad}_U (Y) \neq 0$. 
 So, from Eqs. (\ref{eq:33}) and  (\ref{eq:36}), we conclude that there exists at least one  $\rho_{SE}^{(i)} \in \tilde{\mathbb{S}}$ such that its reduced dynamics in Eq.  (\ref{eq:33}) deviates from the (linear trace-preserving)  Hermitian map $\Phi_S=\mathrm{Tr}_{E} \circ \mathrm{Ad}_U \circ\Lambda_S$.
\end{prop}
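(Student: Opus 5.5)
The plan is to reduce Proposition \ref{pro:2} to Lemma \ref{lem1} by linearity. The bridge is that every $Y\in\mathrm{ker}\ \mathrm{Tr}_E$ can be written as a linear combination of the ``correction'' operators $R^{(i)}$, $i>(d_S)^2$, attached to the extra basis states $\rho_{SE}^{(i)}\in\tilde{\mathbb{S}}$. If all their images $R_S^{\prime(i)}$ vanished, then $\mathrm{Tr}_E\circ\mathrm{Ad}_U$ would annihilate the whole kernel. Lemma \ref{lem1} says this forces $U=U_S\otimes U_E$, which contradicts the hypothesis.

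\textbf{Step 1: decomposition of the extra basis states.} Fix $\mathbb{S}_S$, the lifts $\hat{\mathbb{S}}$, and the Hermitian trace-preserving assignment map $\Lambda_S$ of Eq.~(\ref{eq:27}). Extend $\hat{\mathbb{S}}$ to a basis $\tilde{\mathbb{S}}$ of $\mathcal{L}(\cH_S\otimes\cH_E)$ made of density operators. This is possible because $\mathcal{D}$ spans $\mathcal{L}(\cH_S\otimes\cH_E)$: Hermitian operators are real combinations of states, and every operator is $H_1+iH_2$ with $H_1,H_2$ Hermitian. For each $i>(d_S)^2$, define $R^{(i)}=\rho_{SE}^{(i)}-\Lambda_S(\rho_S^{(i)})$ as in Eq.~(\ref{eq:32}). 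Then $\mathrm{Tr}_E R^{(i)}=0$, because $\Lambda_S$ is a right inverse of $\mathrm{Tr}_E$ on $\mathcal{L}(\cH_S)$: $\mathrm{Tr}_E\Lambda_S(\rho_S^{(l)})=\rho_S^{(l)}$, extended linearly. Applying $\mathrm{Tr}_E\circ\mathrm{Ad}_U$ gives Eq.~(\ref{eq:33}) directly, $\rho_S^{\prime(i)}=\Phi_S(\rho_S^{(i)})+R_S^{\prime(i)}$.

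\textbf{Step 2: the kernel lies in the span of the $R^{(i)}$.} Take any $Y\in\mathrm{ker}\ \mathrm{Tr}_E$ and expand it in $\tilde{\mathbb{S}}$. Substitute Eq.~(\ref{eq:32}) for the basis elements with $l>(d_S)^2$ and regroup as in Eq.~(\ref{eq:35}), so that $Y=\sum_{l\le (d_S)^2}\tilde d_l\rho_{SE}^{(l)}+\sum_{l>(d_S)^2}d_lR^{(l)}$. Now take $\mathrm{Tr}_E$. This gives $0=\sum\tilde d_l\rho_S^{(l)}$, and linear independence of $\mathbb{S}_S$ forces every $\tilde d_l=0$. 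Hence $Y=\sum_{l>(d_S)^2} d_lR^{(l)}$, and by linearity $\mathrm{Tr}_E\circ\mathrm{Ad}_U(Y)=\sum d_l R_S^{\prime(l)}$, which is Eq.~(\ref{eq:36}).

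\textbf{Step 3: conclusion.} Suppose $R_S^{\prime(i)}=0$ for all $i>(d_S)^2$. Then Step 2 gives $\mathrm{Tr}_E\circ\mathrm{Ad}_U(Y)=0$ for every $Y$ in the kernel. By Lemma \ref{lem1} this would make $U$ factorisable, contrary to the hypothesis. So some $R_S^{\prime(i)}\neq 0$, and Eq.~(\ref{eq:33}) shows that this $\rho_{SE}^{(i)}\in\tilde{\mathbb{S}}$ evolves differently from $\Phi_S(\rho_S^{(i)})$.

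\textbf{Main obstacle.} The proposition itself needs nothing more than the bookkeeping above; the real content was already settled in Lemma \ref{lem1}. The point I would check most carefully is that the witness is a genuine density operator rather than an arbitrary element of the kernel. This is guaranteed because $\tilde{\mathbb{S}}$ was chosen to consist of states, and that is exactly the gap the paper flagged after Lemma \ref{lem1}.
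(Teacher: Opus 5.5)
Your proposal is correct and follows essentially the same route as the paper. Both arguments expand an arbitrary $Y\in\mathrm{ker}\ \mathrm{Tr}_E$ in the state basis $\tilde{\mathbb{S}}$, use linear independence of $\mathbb{S}_S$ to kill the $\Lambda_S$-components so that $Y=\sum_{l>(d_S)^2} d_l R^{(l)}$, and then invoke Lemma~\ref{lem1} to force some $R_S^{\prime(l)}\neq 0$. Your added checks, that $\mathcal{D}$ spans $\mathcal{L}(\cH_S\otimes\cH_E)$ and that $\mathrm{Tr}_E\circ\Lambda_S$ is the identity, are details the paper leaves implicit.
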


Note that, when $\Phi_S$ is a non-positive map, the above Proposition gives us rather a trivial result: A non-positive  $\Phi_S$ maps some  $\tilde{\rho}_{S}\in\mathcal{D}_S$ to non-positive operators, and so cannot describe the reduced dynamics correctly for these initial states of the system $\tilde{\rho}_{S}$. In other words, it cannot give correctly the reduced dynamics for initial states of the system-environment $\tilde{\rho}_{SE}$, for which we have $\mathrm{Tr}_{E}(\tilde{\rho}_{SE})=\tilde{\rho}_{S}$. 

But,  when $\Phi_S$ is a positive map, then Proposition \ref{pro:2} provides us a nontrivial generalization of  Proposition \ref{pro:1}.

\section{Positivity for one $U$ can imply non-linearity for another $U$}\label{sec:E}

In the previous section, we have started with considering the sets $\mathbb{S}_S=\left\lbrace \rho_S^{(l)} \right\rbrace $ and $\hat{\mathbb{S}} =\left\lbrace \rho_{SE}^{(l)} \right\rbrace $ both  including ${(d_S)}^2$ linearly independent states such that $\rho_S^{(l)} = \mathrm{Tr}_{E} (\rho_{SE}^{(l)})$.
Then, we have defined the subspaces
\begin{equation}
\label{eq:37}
\begin{aligned}
\mathcal{V}= \mathrm{Span}_{\mathbb{C}} \  \hat{\mathbb{S}}  \subset \mathcal{L}(\cH_S \otimes \cH_E),  
\end{aligned}
\end{equation}  
and 
\begin{equation}
\label{eq:38}
\begin{aligned}
\mathcal{V}_S=\mathrm{Tr}_{E} \mathcal{V}=\mathrm{Span}_{\mathbb{C}} \  \mathbb{S}_S=\mathcal{L}(\cH_S).
\end{aligned}
\end{equation}  
There is a one-to-one correspondence  between the subspaces $\mathcal{V}$ and $\mathcal{V}_S$. Each $A_S \in \mathcal{V}_S=\mathcal{L}(\cH_S)$ in Eq. (\ref{eq:17}) is related only to one $A  \in \mathcal{V}$ as 
\begin{equation}
\label{eq:39}
\begin{aligned}
A=\Lambda_S(A_{S})=\sum_{l=1}^{{(d_S)}^2} c_l \rho_{SE}^{(l)},
\end{aligned}
\end{equation} 
where $\Lambda_S$ is the assignment map introduced in Eq. (\ref{eq:27}), and vice versa.

For the set of possible initial states of the system-environment as $\mathcal{S}=\mathcal{V}\cap\mathcal{D}$, Eq. 
 (\ref{eq:29}) holds.
 Note that, as a special case of   Eq.  (\ref{eq:39}), each $\rho_{SE} \in \mathcal{S}$ and $\rho_{S}=\mathrm{Tr}_{E}(\rho_{SE}) \in \mathcal{S}_S=\mathrm{Tr}_{E} \ \mathcal{S}$ are related to each other by the assignment map  in Eq. (\ref{eq:27}).

  On the other hand, since  $\mathcal{V}_S=\mathcal{L}(\cH_S)$, for each $\rho_S \in \mathcal{D}_S$,  Eq.  (\ref{eq:18}) holds. 
Now,  if we have $\mathcal{S}_S=\mathcal{D}_S$, this means that the assignment map  in Eq. (\ref{eq:27}) maps each state in Eq. (\ref{eq:18}) to a valid density operator. In other words, the assignment map $\Lambda_S$  in Eq. (\ref{eq:27}) is a positive map on the whole $\mathcal{L}(\cH_S)$. Consequently, $\Lambda_S$ is also CP \cite{5a, 35, 36}, and so the reduced dynamics is CP too.

In summary, if a) there exists   a one-to-one correspondence  between the subspaces $\mathcal{V}$ and $\mathcal{V}_S$ and b) $\mathcal{S}_S=\mathcal{D}_S$, then the assignment map $\Lambda_S$  in Eq. (\ref{eq:27}) is CP, and so is the reduced dynamics for arbitrary system-environment unitary evolution $U$.
This is, in fact, a restatement of the main result of Ref. \cite{37}.

Let us relax the one-to-one correspondence  between  $\mathcal{V}$ and $\mathcal{V}_S$. For a given $U$, consider  all $\rho_{SE}^{(i)} \in \tilde{\mathbb{S}}$, with $i  > {(d_S)}^2$, for which $R_S^{\prime(i)}$ in Eq.  (\ref{eq:33}) are zero. We add all such $\rho_{SE}^{(i)}$ to the set  $\hat{\mathbb{S}}$ to construct the extended set $\hat{\mathbb{S}}^{(e)}\subseteq\tilde{\mathbb{S}}$ including $n \leq {(d_S)}^2{(d_E)}^2$ of linearly independent system-environment  states. In addition, we define the extended subspace 
\begin{equation}
\label{eq:40}
\begin{aligned}
\mathcal{V}^{(e)}= \mathrm{Span}_{\mathbb{C}} \  \hat{\mathbb{S}}^{(e)}  \subseteq  \mathcal{L}(\cH_S \otimes \cH_E).  
\end{aligned}
\end{equation}
Therefore, for each $\rho_{SE} \in \mathcal{S}^{(e)}= \mathcal{V}^{(e)}\cap \mathcal{D}$, we have
\begin{equation}
\label{eq:41}
\begin{aligned}
\rho_{SE}=\sum_{l=1}^{n} \tilde{a}_l \rho_{SE}^{(l)},
\end{aligned}
\end{equation}  
  with    real coefficients  $\tilde{a}_l$.
  Inserting $\rho_{SE}^{(l)}$ with $l  > {(d_S)}^2$ from
  Eq. (\ref{eq:32}),  we can rewrite the above equation as
\begin{equation}
\label{eq:42}
\begin{aligned}
\rho_{SE}=\sum_{l=1}^{{(d_S)}^2} a_l \rho_{SE}^{(l)}+ \sum_{l={(d_S)}^2 + 1}^{n} \tilde{a}_l R^{(l)} \\
=\Lambda_S (\rho_{S}) +  \sum_{l={(d_S)}^2 + 1}^{n} \tilde{a}_l R^{(l)} \quad  \\
=\Lambda_S (\rho_{S}) + R(\rho_{SE}),  \qquad\qquad
\end{aligned}
\end{equation}
where $\rho_{S}=\mathrm{Tr}_{E} (\rho_{SE})$ is expanded as  Eq. (\ref{eq:18}).

Remember  we have assumed that,  for ${(d_S)}^2 < l \leq n $, $R_S^{\prime(l)}$ in Eq. (\ref{eq:33}) are all zero. So, using Eq. (\ref{eq:1a}), the reduced dynamics of the system is given by  
\begin{equation}
\label{eq:43}
\rho_{S}^{\prime}=\mathrm{Tr}_{E} \circ \mathrm{Ad}_U (\rho_{SE})=\mathrm{Tr}_{E} \circ \mathrm{Ad}_U  \circ\Lambda_S (\rho_{S})=\Phi_S (\rho_{S}),
\end{equation}
for all $\rho_{S} \in \mathcal{S}^{(e)}_S= \mathrm{Tr}_{E} (\mathcal{V}^{(e)}\cap \mathcal{D})$.

Now, assume that, for a given system-environment unitary evolution $U_1$, we have a) $\mathcal{S}^{(e)}_S=\mathcal{D}_S$, and b) $\Phi_{1S}$ is a positive, but not a CP, map. When $\Phi_{1S}=\mathrm{Tr}_{E} \circ \mathrm{Ad}_{U_1}  \circ\Lambda_S$ is not CP, it means that $\Lambda_S$ is not CP, since $\mathrm{Tr}_{E}$ and $\mathrm{Ad}_{U_1}$ are CP. In fact, then   $\Lambda_S$ is a non-positive map \cite{36}. Therefore, for some  $\rho_{S} \in \mathcal{S}^{(e)}_S=\mathcal{D}_S$  expanded as Eq. (\ref{eq:18}), the operator
\begin{equation}
\label{eq:44}
\begin{aligned}
\Lambda_S (\rho_{S})=\sum_{l=1}^{{(d_S)}^2} a_l \rho_{SE}^{(l)},
\end{aligned}
\end{equation}
is not a positive operator. Consequently, the system-environment state $\rho_{SE} \in \mathcal{S}^{(e)}$, for which we have $\mathrm{Tr}_{E}(\rho_{SE})=\rho_{S }$, is as 
\begin{equation}
\label{eq:45}
\begin{aligned}
\rho_{SE}=
 \Lambda_S (\rho_{S}) + R(\rho_{SE}), 
\end{aligned}
\end{equation}
where $ R(\rho_{SE})$ is surely nonzero.
For a nonzero (Hermitian traceless) operator $ R $, there exists a unitary operator $U_2$ such that $R^\prime_S=\mathrm{Tr}_{E} \circ \mathrm{Ad}_{U_2}(R) \neq 0$ \cite{21a}. So, for this initial state $\rho_{SE} \in \mathcal{S}^{(e)}$, using Eqs. (\ref{eq:1a}) and (\ref{eq:45}), the reduced dynamics is given by
\begin{equation}
\label{eq:46}
\rho_{S}^{\prime}=\mathrm{Tr}_{E} \circ \mathrm{Ad}_{U_2} (\rho_{SE})=\Phi_{2S} (\rho_{S})+ R^\prime_S,
\end{equation}
where $\Phi_{2S}=\mathrm{Tr}_{E} \circ \mathrm{Ad}_{U_2}\circ\Lambda_S$ is a Hermitian map, and $R^\prime_S$ gives the nonzero deviation from this linear reduced dynamics.

In summary, we have proved the following Proposition.
\begin{prop}
\label{pro:3}
Consider a set of possible initial states of the system-environment as $ \mathcal{S}^{(e)}= \mathcal{V}^{(e)}\cap \mathcal{D}$ such that $ \mathcal{S}^{(e)}_S=\mathrm{Tr}_{E} \ \mathcal{S}^{(e)}=\mathcal{D}_S$.
If, for one unitary system-environment evolution $U_1$, the reduced dynamics of the system is given by a positive, but not CP, map $\Phi_{1S}$, then there exists another (non-factorisable) unitary system-environment evolution $U_2$ for which deviation from linear reduced dynamics occurs for some system-environment initial states $\rho_{SE} \in \mathcal{S}^{(e)}$.
\end{prop}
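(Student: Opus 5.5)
The argument follows the discussion preceding the statement, made explicit step by step. We fix the setting: $\hat{\mathbb{S}}^{(e)}$ and $\mathcal{V}^{(e)}$ are built for $U_1$, so every $\rho_{SE}\in\mathcal{S}^{(e)}$ decomposes as in Eq. (\ref{eq:42}), $\rho_{SE}=\Lambda_S(\rho_S)+R(\rho_{SE})$, with $\mathrm{Tr}_E R=0$. The reduced dynamics for $U_1$ on all of $\mathcal{S}^{(e)}_S=\mathcal{D}_S$ is $\Phi_{1S}=\mathrm{Tr}_E\circ\mathrm{Ad}_{U_1}\circ\Lambda_S$.

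\textbf{Step 1: $\Lambda_S$ is not positive.} If $\Lambda_S$ were CP, then $\Phi_{1S}$, a composition of CP maps, would be CP, contradicting the hypothesis. So $\Lambda_S$ is not CP. By the cited result \cite{5a, 35, 36}, a positive assignment map of this type is automatically CP, of the form $(*)\otimes\tilde{\omega}_E$. Hence $\Lambda_S$ is not even positive: there is a $\rho_S\in\mathcal{D}_S$ with $\Lambda_S(\rho_S)\not\geq 0$.

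\textbf{Step 2: a nonzero correlation term.} Since $\mathcal{S}^{(e)}_S=\mathcal{D}_S$, this $\rho_S$ has a preimage $\rho_{SE}\in\mathcal{S}^{(e)}$, and $\rho_{SE}$ is positive. From Eq. (\ref{eq:45}), $R(\rho_{SE})=\rho_{SE}-\Lambda_S(\rho_S)$. This must be nonzero, since otherwise $\rho_{SE}=\Lambda_S(\rho_S)$ would not be positive. It is also Hermitian with $\mathrm{Tr}_E R=0$.

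\textbf{Step 3: choosing $U_2$.} This is the main obstacle. We need a unitary $U_2$ with $\mathrm{Tr}_E\circ\mathrm{Ad}_{U_2}(R)\neq 0$, and we invoke the result of \cite{21a} for this. To make it self-contained, I would argue as follows. Since $R\ne0$, the set of all $\mathrm{Ad}_W(R)$, with $W$ unitary on $SE$, spans a space that is not contained in $\ker\mathrm{Tr}_E$. Otherwise, because unitaries span $\mathcal{L}(\cH_S\otimes\cH_E)$, the two-sided ideal generated by $R$, which is all of $\mathcal{L}(\cH_S\otimes\cH_E)$, would lie in $\ker\mathrm{Tr}_E$, which is absurd. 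More concretely, one can diagonalize $R$ and pick a unitary that permutes eigenvectors so that a nonzero eigenvalue contributes unevenly to a reduced diagonal entry.

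Factorisable $U_2$ cannot work, since they preserve $\ker\mathrm{Tr}_E$. So $U_2$ is automatically non-factorisable, consistent with Lemma \ref{lem1}.

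\textbf{Step 4: deviation from linearity.} Applying Eq. (\ref{eq:1a}) with $U_2$ gives
\[
\rho_S'=\Phi_{2S}(\rho_S)+R'_S,
\]
where $\Phi_{2S}=\mathrm{Tr}_E\circ\mathrm{Ad}_{U_2}\circ\Lambda_S$ and $R'_S\neq 0$.

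I would also remark that no other linear map can fit the data. The restriction of the reduced dynamics to $\mathcal{S}^{(e)}$, viewed as a function of $\rho_S$, would have to agree with $\Phi_{2S}$ on the states $\Lambda_S(\rho_S^{(l)})=\rho_{SE}^{(l)}$ of $\hat{\mathbb{S}}$. These carry no $R$ term, so any linear map must agree with $\Phi_{2S}$ on the basis $\{\rho_S^{(l)}\}$ and hence equal $\Phi_{2S}$. Therefore the reduced dynamics for $U_2$ on $\mathcal{S}^{(e)}$ is not linear.
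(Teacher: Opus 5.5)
Your proof is correct and follows the paper's argument step for step. A non-CP $\Phi_{1S}$ forces $\Lambda_S$ to be non-CP, hence non-positive by \cite{36}. Then $\mathcal{S}^{(e)}_S=\mathcal{D}_S$ gives a state with $R(\rho_{SE})\neq 0$, and \cite{21a} supplies $U_2$.

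Your closing remark adds something the paper does not state. The states of $\hat{\mathbb{S}}$ lie in $\mathcal{S}^{(e)}$ with zero correlation term. So any linear map reproducing the $U_2$ dynamics must equal $\Phi_{2S}$. The paper's deviation from $\Phi_{2S}$ is therefore deviation from \emph{every} linear map.

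One piece of your self-contained version of Step 3 is wrong, though the step does not depend on it. From $\mathrm{Tr}_E(WRW^\dagger)=0$ for all unitaries $W$, you control only the span of the conjugates of $R$.
\begin{itemize}
\item That span is closed under $X\mapsto[H,X]$, but not under independent left and right multiplication, since $W_1RW_2$ is not a conjugate of $R$.
\item For traceless $R$ the span contains only traceless operators. So it can never be the two-sided ideal $\mathcal{L}(\cH_S\otimes\cH_E)$, and the inference ``unitaries span, hence the ideal lies in $\ker\mathrm{Tr}_E$'' does not go through.
\end{itemize}
Your eigenvector-permutation idea is the right one, and it can be made precise as follows.
\begin{itemize}
\item Let $W$ send the eigenvectors of $R$ for its $d_E$ largest eigenvalues to $\vert 1_S\rangle\otimes\vert j_E\rangle$, $j=1,\dots,d_E$, and send the remaining eigenvectors to the remaining product basis vectors.
\item Then $\langle 1_S\vert\mathrm{Tr}_E(WRW^\dagger)\vert 1_S\rangle$ is the sum of those $d_E$ eigenvalues.
\item This sum is strictly positive, because $R\neq 0$ is traceless and $d_E<d_Sd_E$ (as $d_S\geq 2$).
\end{itemize}
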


Noting that the above result is based on the nonzero-ness of $R$ in Eq. (\ref{eq:45}), we can restate it in the following form too.
\begin{cor}
\label{cor:1}
Consider an arbitrary subspace $\tilde{\mathcal{V}}   \subseteq \mathcal{L}(\cH_S \otimes \cH_E)$ spanned by the states. Construct the set of possible initial states of the system-environment as $\tilde{\mathcal{S}}=\tilde{\mathcal{V}} \cap \mathcal{D}$.
Now, if a) $\tilde{\mathcal{S}}_S=\mathrm{Tr}_{E} \ \tilde{\mathcal{S}}$ is the same as $ \mathcal{D}_S$, and b) the reduced dynamics of the system for some $U_1$ is given by a a positive, but not CP, map, then there is no one-to-one correspondence  between the subspaces $\tilde{\mathcal{V}}$ and $\tilde{\mathcal{V}}_S= \mathrm{Tr}_{E} \  \tilde{\mathcal{V}}=\mathcal{L}(\cH_S) $.
\end{cor}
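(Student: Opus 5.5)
We prove Corollary~\ref{cor:1} by contradiction, reusing the argument behind Proposition~\ref{pro:3}. Suppose that (a) and (b) hold and that, in addition, there is a one-to-one correspondence between $\tilde{\mathcal{V}}$ and $\tilde{\mathcal{V}}_S=\mathcal{L}(\cH_S)$. We will show that every state in $\tilde{\mathcal{S}}$ is then of the form $\Lambda_S(\rho_S)$ with a CP assignment map, so the reduced dynamics is CP for every $U$, contradicting (b).

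\emph{Step 1: build an assignment map.} Since $\tilde{\mathcal{V}}$ is spanned by states, it has a basis of states $\{\rho_{SE}^{(l)}\}$. The correspondence means $\mathrm{Tr}_E$ restricted to $\tilde{\mathcal{V}}$ is a bijection onto $\mathcal{L}(\cH_S)$. Hence $\dim\tilde{\mathcal{V}}=(d_S)^2$, and the reduced states $\rho_S^{(l)}=\mathrm{Tr}_E(\rho_{SE}^{(l)})$ form a basis $\mathbb{S}_S$. We are therefore exactly in the setting of Eqs.~(\ref{eq:27})--(\ref{eq:30}), with $\hat{\mathbb{S}}=\{\rho_{SE}^{(l)}\}$ and $\mathcal{V}=\tilde{\mathcal{V}}$. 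The Hermitian map $\Lambda_S=(\mathrm{Tr}_E|_{\tilde{\mathcal{V}}})^{-1}$ is well defined. Every $\rho_{SE}\in\tilde{\mathcal{S}}$ satisfies $\rho_{SE}=\Lambda_S(\mathrm{Tr}_E\rho_{SE})$; that is, the remainder $R(\rho_{SE})$ in Eq.~(\ref{eq:45}) is identically zero.

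\emph{Step 2: positivity of $\Lambda_S$.} By (a), for every $\rho_S\in\mathcal{D}_S$ there is some $\rho_{SE}\in\tilde{\mathcal{S}}$ with $\mathrm{Tr}_E\rho_{SE}=\rho_S$. Injectivity forces $\rho_{SE}=\Lambda_S(\rho_S)$, so $\Lambda_S(\rho_S)\ge 0$. Thus $\Lambda_S$ is positive on all of $\mathcal{L}(\cH_S)$. This is the same reasoning as in the paragraph preceding Eq.~(\ref{eq:40}), which restates the result of Ref.~\cite{37}.

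\emph{Step 3: conclude.} By the cited result \cite{5a,35,36}, a positive assignment map of this type is CP and in fact has the form $\Lambda_S(*)=(*)\otimes\tilde{\omega}_E$. Consequently, for any $U_1$ the reduced dynamics $\Phi_{1S}=\mathrm{Tr}_E\circ\mathrm{Ad}_{U_1}\circ\Lambda_S$ is a composition of CP maps and hence CP. This contradicts (b), so no one-to-one correspondence can exist.

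Equivalently, in the language of Proposition~\ref{pro:3}: (b) forces $\Lambda_S$ to be non-positive, so some $\rho_{SE}\in\tilde{\mathcal{S}}$ must have $R(\rho_{SE})\ne0$. But a nonzero $R$ is a nonzero element of $\tilde{\mathcal{V}}\cap\ker\mathrm{Tr}_E$, which is impossible under injectivity. The only delicate point is Step 2, namely making sure that (a) together with injectivity really yields positivity of $\Lambda_S$ on the whole space and not merely on a subset. This holds because the preimage in $\tilde{\mathcal{V}}$ of each $\rho_S$ is unique, and $\mathcal{D}_S$ spans $\mathcal{L}(\cH_S)$. The rest is a direct invocation of the cited positivity-implies-CP result.
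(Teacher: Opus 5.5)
Your proof is correct and is essentially the paper's argument. The paper obtains Corollary~\ref{cor:1} from the nonzero $R$ in Eq.~(\ref{eq:45}): a positive non-CP $\Phi_{1S}$ forces $\Lambda_S$ to be non-CP, hence non-positive by \cite{36}, hence some state of $\tilde{\mathcal{S}}$ carries a nonzero $R\in\tilde{\mathcal{V}}\cap\ker\mathrm{Tr}_E$. That is just the contrapositive of your Steps 1--3 and of the remark the paper makes right before Eq.~(\ref{eq:40}); the one step you leave implicit, that the map in (b) must equal $\mathrm{Tr}_E\circ\mathrm{Ad}_{U_1}\circ\Lambda_S$ because the two agree on $\mathcal{D}_S$, which spans $\mathcal{L}(\cH_S)$, is immediate.
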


No similar results can be proven for non-positive reduced dynamics or the CP ones.
When the reduced dynamics is given by a non-positive map, then, obviously, $\tilde{\mathcal{S}}_S$ cannot be the same as $ \mathcal{D}_S$.
In addition, when the reduced dynamics, for a system-environment unitary evolution $U_1$, is given by a CP map $\mathcal{E}_{1S}$, and also $\tilde{\mathcal{S}}_S=\mathcal{D}_S$, both options, i.e., the one-to-one correspondence  between    $\tilde{\mathcal{V}}$ and $\tilde{\mathcal{V}}_S=\mathcal{L}(\cH_S) $ and the non-one-to-one correspondence  between    $\tilde{\mathcal{V}}$ and $\tilde{\mathcal{V}}_S$, are possible \cite{18a}.

Also  note that the assumption that $\tilde{\mathcal{S}}_S=\mathcal{D}_S$ is necessary in Proposition \ref{pro:3} and Corollary \ref{cor:1}. Otherwise, all $\rho_S \in \mathcal{D}_S$ are not needed to be related to some  $\rho_{SE} \in  \tilde{\mathcal{S}}$ such that $\mathrm{Tr}_{E}(\rho_{SE})=\rho_{S}$. Such requirement is needed only for  $\rho_S \in \tilde{\mathcal{S}}_S$
 Therefore, for $\rho_S \notin \tilde{\mathcal{S}}_S$, the non-positivity of the operator $\Lambda_S (\rho_S)$ makes no problem. Only when  $\tilde{\mathcal{S}}_S=\mathcal{D}_S$, the non-positivity of   $\Lambda_S (\rho_S)$ for some $\rho_S \in \mathcal{D}_S$ results in the nonzero-ness of $R$ in Eq. (\ref{eq:45})  for some $\rho_{SE }\in \tilde{\mathcal{S}} $.

Using the following Lemma, we can restate Corollary \ref{cor:1} in terms of the sets of possible initial states $\tilde{\mathcal{S}}$ and $\tilde{\mathcal{S}}_S$, instead of the related subspaces $\tilde{\mathcal{V}}$ and $\tilde{\mathcal{V}}_S$.

\begin{lemma}
\label{lem2}
 Consider an arbitrary convex set of system-environment states $\tilde{\mathcal{S}}=\left\lbrace \sigma_{SE}\right\rbrace $. Construct the subspace $\tilde{\mathcal{V}}   \subseteq \mathcal{L}(\cH_S \otimes \cH_E)$ as $\tilde{\mathcal{V}}= \mathrm{Span}_{\mathbb{C}} \ \tilde{\mathcal{S}}$. Now, there is a   one-to-one correspondence  between  the subspaces  $\tilde{\mathcal{V}}$ and $\tilde{\mathcal{V}}_S= \mathrm{Tr}_{E} \  \tilde{\mathcal{V}}$, if and only if there is a  one-to-one correspondence  between  the convex sets of states $\tilde{\mathcal{S}}$ and  $\tilde{\mathcal{S}}_S=\mathrm{Tr}_{E} \ \tilde{\mathcal{S}}$.
 \end{lemma}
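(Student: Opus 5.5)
The plan is to interpret "one-to-one correspondence" as injectivity of the restricted partial trace. On the subspace side, $\mathrm{Tr}_E:\tilde{\mathcal{V}}\to\tilde{\mathcal{V}}_S$ is surjective by definition, so the claim is that it is injective, i.e. $\ker \mathrm{Tr}_E\cap\tilde{\mathcal{V}}=\{0\}$. On the set side, the claim is that $\mathrm{Tr}_E:\tilde{\mathcal{S}}\to\tilde{\mathcal{S}}_S$ is injective. I will prove the two implications separately.

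\emph{Subspaces $\Rightarrow$ sets.} Since $\tilde{\mathcal{S}}\subseteq\tilde{\mathcal{V}}$, this direction is immediate. If $\sigma_{SE},\sigma'_{SE}\in\tilde{\mathcal{S}}$ have the same partial trace, then $\sigma_{SE}-\sigma'_{SE}\in\tilde{\mathcal{V}}\cap\ker\mathrm{Tr}_E=\{0\}$.

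\emph{Sets $\Rightarrow$ subspaces.} This is the substantive direction, and I will argue it by contraposition. Suppose $0\neq Y\in\tilde{\mathcal{V}}$ with $\mathrm{Tr}_E(Y)=0$. The goal is to produce two distinct states in $\tilde{\mathcal{S}}$ with equal reductions.

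First, reduce to a Hermitian $Y$. Every element of $\mathcal{D}$ is Hermitian, so $\tilde{\mathcal{V}}$ is closed under the adjoint, and $\mathrm{Tr}_E(Y^\dagger)=\mathrm{Tr}_E(Y)^\dagger=0$. Hence $(Y+Y^\dagger)/2$ and $(Y-Y^\dagger)/2i$ both lie in $\tilde{\mathcal{V}}\cap\ker\mathrm{Tr}_E$, and at least one of them is nonzero. So we may take $Y$ Hermitian. Since $\tilde{\mathcal{V}}$ is spanned by elements of $\tilde{\mathcal{S}}$, taking real parts of the coefficients gives $Y=\sum_k r_k\sigma^{(k)}$ with real $r_k$ and $\sigma^{(k)}\in\tilde{\mathcal{S}}$. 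Tracing gives $\sum_k r_k=\mathrm{Tr}(Y)=0$.

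Next, split the expansion by sign. Write $Y=\sum_{r_k>0}r_k\sigma^{(k)}-\sum_{r_k<0}|r_k|\sigma^{(k)}$. The two total weights are equal, say $t=\sum_{r_k>0}r_k=\sum_{r_k<0}|r_k|$, and $t>0$ because $Y\neq0$. Define
$\sigma_+=t^{-1}\sum_{r_k>0}r_k\sigma^{(k)}$ and $\sigma_-=t^{-1}\sum_{r_k<0}|r_k|\sigma^{(k)}$.
Both are convex combinations of elements of $\tilde{\mathcal{S}}$, so by convexity both belong to $\tilde{\mathcal{S}}$. Moreover $\sigma_+-\sigma_-=Y/t\neq0$ while $\mathrm{Tr}_E(\sigma_+)=\mathrm{Tr}_E(\sigma_-)$. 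This contradicts the injectivity of $\mathrm{Tr}_E$ on $\tilde{\mathcal{S}}$.

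The only delicate point is the reduction to a real, Hermitian expansion, that is, checking that the real span and the complex span interact correctly. This is settled by the adjoint-closure observation above. Convexity of $\tilde{\mathcal{S}}$ is used exactly once, to ensure that $\sigma_\pm$ lie in $\tilde{\mathcal{S}}$.
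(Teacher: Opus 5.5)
Your proof is correct and follows essentially the same route as the paper. The paper splits an element of $\tilde{\mathcal{V}}\cap\ker\mathrm{Tr}_E$ into positively and negatively weighted parts, uses convexity to collapse each part into a single state of $\tilde{\mathcal{S}}$, and then invokes injectivity on $\tilde{\mathcal{S}}$; you do the same, the only differences being that you argue by contraposition and first reduce to a Hermitian $Y$, whereas the paper works directly with four nonnegative parts (real and imaginary), and that your use of $\mathrm{Tr}(Y)=0$ and $t>0$ makes the equal-weight and zero-weight bookkeeping more explicit.
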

\textit{Proof.}
When there is a one-to-one correspondence  between $\tilde{\mathcal{V}}$ and $\tilde{\mathcal{V}}_S$, then, trivially, this will be the case for $\tilde{\mathcal{S}}$ and $\tilde{\mathcal{S}}_S$, since $\tilde{\mathcal{S}}\subset\tilde{\mathcal{V}}$ and $\tilde{\mathcal{S}}_S\subset\tilde{\mathcal{V}}_S$. So, in the following, we focus on the  reverse.

Each $X\in \tilde{\mathcal{V}}$ is expanded as 
\begin{equation}
\label{eq:47}
\begin{aligned}
X=\sum_{j}  c_j \tilde{\sigma}_{SE}^{(j)},
\end{aligned}
\end{equation}  
   where $c_j$ are  complex coefficients, and  $\tilde{\sigma}_{SE}^{(j)}\in \tilde{\mathcal{S}}$ are not necessarily linearly independent.
Expand  each $c_j$ as
\begin{equation}
\label{eq:48}
\begin{aligned}
  c_j=(c_j^{(1)} - c_j^{(2)}) +i(c_j^{(3)} - c_j^{(4)}),
\end{aligned}
\end{equation}
 where $i=\sqrt{-1}$, and $c_j^{(1)}$, $c_j^{(2)}$, $c_j^{(3)}$ and $c_j^{(4)}$ are all positive.
Inserting  Eq. (\ref{eq:48}) into Eq. (\ref{eq:47}), and noting that the set $\tilde{\mathcal{S}}$ is convex, we can rewrite Eq. (\ref{eq:47}) as
\begin{equation}
\label{eq:49}
\begin{aligned}
 X=(a^{(1)} \sigma_{SE}^{(1)} - a^{(2)} \sigma_{SE}^{(2)}) +i(a^{(3)}\sigma_{SE}^{(3)} - a^{(4)}\sigma_{SE}^{(4)}),
\end{aligned}
\end{equation}
 where, for $1\leq m \leq 4$, we have  $a^{(m)}=\sum_{j} c_j^{(m)}$ 
  and $\sigma_{SE}^{(m)} \in \tilde{\mathcal{S}}$.

Tracing over the environment, we have 
\begin{equation}
\label{eq:50}
\begin{aligned}
 X_S=\mathrm{Tr}_{E}(X)=(a^{(1)} \sigma_{S}^{(1)} - a^{(2)} \sigma_{S}^{(2)}) +i(a^{(3)}\sigma_{S}^{(3)} - a^{(4)}\sigma_{S}^{(4)}),
\end{aligned}
\end{equation}
where $ \sigma_{S}^{(m)}= \mathrm{Tr}_{E}(\sigma_{SE}^{(m)})\in \tilde{\mathcal{S}}_S$.
When $X=0$, then obviously $X_S=\mathrm{Tr}_{E}(X)=0$.

On the other hand, when  $X_S=0$, then, in Eq. (\ref{eq:50}), we have  $a^{(1)}=a^{(2)}$ and $\sigma_{S}^{(1)}=\sigma_{S}^{(2)}$. Also, we have   $a^{(3)}=a^{(4)}$ and $\sigma_{S}^{(3)}=\sigma_{S}^{(4)}$.
Since, by assumption, there is a  one-to-one correspondence  between  the convex sets   $\tilde{\mathcal{S}}$ and  $\tilde{\mathcal{S}}_S$, the relation $\sigma_{S}^{(1)}=\sigma_{S}^{(2)}$ results that $\sigma_{SE}^{(1)}=\sigma_{SE}^{(2)}$. Also, the relation $\sigma_{S}^{(3)}=\sigma_{S}^{(4)}$ results that $\sigma_{SE}^{(3)}=\sigma_{SE}^{(4)}$. Therefore, from Eq. (\ref{eq:49}), we conclude that $X=0$.

Consequently, the  one-to-one correspondence  between  the convex sets   $\tilde{\mathcal{S}}$ and  $\tilde{\mathcal{S}}_S$ results in the  one-to-one correspondence  between  the subspaces  $\tilde{\mathcal{V}}$ and $\tilde{\mathcal{V}}_S$. $\qquad\qquad \blacksquare$

Lemma \ref{lem2} clearly shows that when there   is no one-to-one correspondence  between the subspaces $\tilde{\mathcal{V}}$ and $\tilde{\mathcal{V}}_S$, then there is no such correspondence  between the convex sets   $\tilde{\mathcal{S}}$ and  $\tilde{\mathcal{S}}_S$. So, we can rewrite the Corollary \ref{cor:1} in the following form.
\begin{cor}
\label{cor:2}
Consider the convex set of possible initial states of the system-environment as $\tilde{\mathcal{S}}=\tilde{\mathcal{V}} \cap \mathcal{D}$  such that  $\tilde{\mathcal{S}}_S=\mathrm{Tr}_{E} \ \tilde{\mathcal{S}}$ is the same as $ \mathcal{D}_S$.
Now, if  the reduced dynamics of the system for some $U_1$ is given by a a positive, but not CP, map, then there is no one-to-one correspondence  between the the   sets   $\tilde{\mathcal{S}}$ and  $\tilde{\mathcal{S}}_S$.
\end{cor}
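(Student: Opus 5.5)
The plan is to derive Corollary \ref{cor:2} directly from Corollary \ref{cor:1} and Lemma \ref{lem2}, arguing by contraposition. Set $\tilde{\mathcal{S}}=\tilde{\mathcal{V}}\cap\mathcal{D}$. It is convex, since both $\tilde{\mathcal{V}}$ and $\mathcal{D}$ are convex. So Lemma \ref{lem2} applies to it, once we note that the subspace it generates is the relevant one.

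The first step is to check that $\mathrm{Span}_{\mathbb{C}}\,\tilde{\mathcal{S}}$ is compatible with $\tilde{\mathcal{V}}$. We have $\mathrm{Span}_{\mathbb{C}}\,\tilde{\mathcal{S}}\subseteq\tilde{\mathcal{V}}$. Conversely, in Corollary \ref{cor:1} the subspace $\tilde{\mathcal{V}}$ is spanned by states, and those states lie in $\tilde{\mathcal{V}}\cap\mathcal{D}=\tilde{\mathcal{S}}$. Hence the two spans coincide.

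If one prefers not to rely on this, it is enough to work with $\tilde{\mathcal{V}}':=\mathrm{Span}_{\mathbb{C}}\,\tilde{\mathcal{S}}$ throughout. Then $\tilde{\mathcal{V}}'\cap\mathcal{D}=\tilde{\mathcal{S}}$, so $\tilde{\mathcal{V}}'$ is a subspace spanned by states and defines the same set of initial states. Hypotheses a) and b) of Corollary \ref{cor:1} are statements about $\tilde{\mathcal{S}}$ and the reduced dynamics on it, so they carry over unchanged.

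The second step is to apply Corollary \ref{cor:1}. By assumption $\tilde{\mathcal{S}}_S=\mathcal{D}_S$, and for $U_1$ the reduced dynamics is given by a positive but non-CP map. Corollary \ref{cor:1} then says there is no one-to-one correspondence between $\tilde{\mathcal{V}}$ and $\tilde{\mathcal{V}}_S$. Equivalently, $\mathrm{Tr}_E$ restricted to $\tilde{\mathcal{V}}$ has a nontrivial kernel.

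The final step uses the ``if'' direction of Lemma \ref{lem2}: a one-to-one correspondence between $\tilde{\mathcal{S}}$ and $\tilde{\mathcal{S}}_S$ would imply one between $\tilde{\mathcal{V}}$ and $\tilde{\mathcal{V}}_S$. This contradicts the previous step. Therefore no one-to-one correspondence exists between $\tilde{\mathcal{S}}$ and $\tilde{\mathcal{S}}_S$; that is, there exist two distinct states in $\tilde{\mathcal{S}}$ with the same reduced state.

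The only delicate point is the identification of $\tilde{\mathcal{V}}$ with $\mathrm{Span}_{\mathbb{C}}(\tilde{\mathcal{V}}\cap\mathcal{D})$ in the first step, since Lemma \ref{lem2} is stated for the span of the convex set. I expect the argument to close once that is settled, by the reasoning above or by replacing $\tilde{\mathcal{V}}$ with $\tilde{\mathcal{V}}'$.
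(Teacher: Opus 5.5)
Your proposal is correct and matches the paper's own argument: Corollary~\ref{cor:2} follows from Corollary~\ref{cor:1} by the contrapositive of the ``if'' direction of Lemma~\ref{lem2}, namely that a one-to-one correspondence between $\tilde{\mathcal{S}}$ and $\tilde{\mathcal{S}}_S$ would force one between $\tilde{\mathcal{V}}$ and $\tilde{\mathcal{V}}_S$. Your explicit check that $\mathrm{Span}_{\mathbb{C}}\,\tilde{\mathcal{S}}=\tilde{\mathcal{V}}$ (the spanning states of $\tilde{\mathcal{V}}$ lie in $\tilde{\mathcal{V}}\cap\mathcal{D}$) is left implicit in the paper, but it is exactly what is needed to invoke Lemma~\ref{lem2}.
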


\section{Summary}\label{sec:F}

Our main results in this paper are summarized in two Lemmas, two Propositions and two  Corollaries.

First, in Lemma \ref{lem1}, we showed that, for each non-factorisable system-environment unitary evolution $U$, there exists at least one $Y \in \mathrm{ker} \ \mathrm{Tr}_E$ such that $Y_S^\prime=\mathrm{Tr}_{E} \circ\mathrm{Ad}_U (Y) \neq 0$. This Lemma is in fact the same as Lemma 2 of Ref. \cite{7a}, and also generalizes the main result of Ref. \cite{34a}.

An operator $Y \in \mathrm{ker} \ \mathrm{Tr}_E$ for which $Y_S^\prime \neq 0$ is not necessarily a valid $R$ in Eqs. \eqref{eq:4a} or \eqref{eq:32}. In sections  ~\ref{sec:C} and ~\ref{sec:D}, we showed that how nonzero-ness of a 
 $Y_S^\prime$ results in the nonzero-ness of some  $R^\prime_S=\mathrm{Tr}_{E} \circ \mathrm{Ad}_{U}(R) $   for some valid  $R$ in Eqs. \eqref{eq:4a} or \eqref{eq:32}.
Consequently, for any non-factorisable $U$, deviation from linear reduced dynamics always occurs, as stated in Propositions \ref{pro:1} and  \ref{pro:2}.

A linear reduced dynamics of the system $\Phi_S$ can be   either CP, positive or non-positive. But, interestingly,  the assignment   map $\Lambda_S$ has only two forms: Consider the convex set $\mathcal{S}$ of possible  initial states of the system-environment. When $\mathcal{S}_S=\mathrm{Tr}_{E} \ \mathcal{S}=\mathcal{D}_S$, the assignment   map $\Lambda_S$ can be either CP or non-positive \cite{36}.

Therefore, if  for a system-environment unitary evolution $U_1$  the reduced dynamics is given by $\Phi_{1S}=\mathrm{Tr}_{E} \circ \mathrm{Ad}_{U_1}  \circ\Lambda_S$ and if $\Phi_{1S}$ is a positive map, we conclude that $\Lambda_S$ is non-positive.
Consequently, there exists another $U_2$ for which the reduced dynamics of the system cannot be given by a linear map for all initial states $\rho_{SE} \in \mathcal{S}$, as stated in Proposition \ref{pro:3}. In addition, there is no  one-to-one correspondence  between the    sets   $\mathcal{S}$ and  $\mathcal{S}_S=\mathcal{D}_S$, as 
 stated in Corollary \ref{cor:2}. The Corollary \ref{cor:2} is in fact a restatement of  the Corollary \ref{cor:1}, using   Lemma \ref{lem2}.

%


\end{document}